\documentclass[12pt,journal,onecolumn]{IEEEtran}

\usepackage{epsfig}
\usepackage{epstopdf}
\usepackage{graphicx}
\usepackage{amsmath}
\usepackage{amsthm}
\usepackage{amsfonts}
\usepackage{cite}
\usepackage{amssymb}
\usepackage{amsbsy}
\usepackage{float}
\usepackage{balance}
\usepackage[stable]{footmisc}
\newtheorem{theorem}{\bf Theorem}

\title{Performance Analysis of Hybrid ARQ with Incremental Redundancy over Amplify-and-Forward Dual-Hop Relay Channels\\[0ex]}
\author{{Ali~Chelli,~\IEEEmembership{Member,~IEEE,}~Amir Hadjtaieb,~and~Mohamed-Slim~Alouini,~\IEEEmembership{Fellow,~IEEE}}}

\begin{document}
\maketitle
\linespread{2}\selectfont
\begin{abstract}
\let\thefootnote\relax\footnotetext{ A. Chelli, A. Hadjtaieb, and M.-S. Alouini are with the Computer, Electrical, and Mathematical Science and Engineering (CEMSE) Division, King Abdullah University of Science and Technology (KAUST) Thuwal, Makkah Province, Saudi Arabia (e-mails:\{ali.chelli,amir.hadjtaieb,slim.alouini\}@kaust.edu.sa).
}
In this paper, we consider a three node relay network comprising a source, a relay, and a destination. The source transmits the message to the destination using hybrid automatic repeat request (HARQ) with incremental redundancy (IR). The relay overhears the transmitted message, amplifies it using a variable gain amplifier, and then forwards the message to the destination. This latter combines both the source and the relay message and tries to decode the information. In case of decoding failure, the destination sends a negative acknowledgement. A new replica of the message containing new parity bits is then transmitted in the subsequent HARQ round. This process continues until successful decoding occurs at the destination or a maximum number $M$ of rounds is reached. We study the performance of HARQ-IR over the considered relay channel from an information theoretic perspective. We derive exact expressions and bounds for the information outage probability, the average number of transmissions, and the average transmission rate. Moreover, we evaluate the delay experienced by Poisson arriving packets over the considered relay network. We also provide analytical expressions for the expected waiting time, the sojourn time, and the energy efficiency. The derived exact expressions are validated by Monte Carlo simulations.
\end{abstract}
\begin{IEEEkeywords}
Amplify-and-forward relaying, hybrid automatic repeat request (HARQ), incremental redundancy, energy efficiency, delay analysis, information outage capacity, average transmission rate.
\end{IEEEkeywords}
\section{Introduction}
During the last decade, relay networks have attracted a lot of interest due to their numerous benefits. The relaying technique allows extending the coverage zone of wireless networks and offers a higher reliability for communication systems \cite{Laneman2004a}. Relay networks have been vastly investigated in the literature \cite{Hasna2003} and they have been as well adopted in wireless communication standards \cite{Oyman2007}. The performance of relay networks can be improved further by the use of the hybrid automatic repeat request (HARQ) technique. HARQ combines both forward error correction and retransmission mechanisms which yields a larger spectral efficiency and ensure a good quality of service even in absence of channel state information at the transmitter.

There are two major HARQ techniques, namely, HARQ with incremental redundancy (IR) and HARQ with Chase combining (CC). HARQ-IR has a larger spectral efficiency relatively to HARQ-CC. However, from an implementation perspective HARQ-IR is more complex and requires more hardware resources \cite{Chelli2013}. In the literature, the performance of HARQ over relay networks has been investigated from an information theoretic perspective in few papers \cite{Narasimhan2008,Maham2012,Chelli2013}. More specifically, the author of \cite{Narasimhan2008} analyzed the outage capacity and the throughput of HARQ-IR over a three node relay network. In \cite{Narasimhan2008}, the channel is assumed to be constant during all HARQ rounds. This assumption simplifies significantly the analysis. However, this assumption is only valid for low mobility scenarios. In \cite{Maham2012}, a similar assumption is considered. The authors of \cite{Maham2012} consider a scenario with multiple relays. At each HARQ round the best relay is selected in a distributed manner. The delay-limited throughput of this communication system is explored in \cite{Maham2012}. In \cite{Chelli2013}, the outage capacity of HARQ-IR and HARQ-CC are studied over a three node relay network. In that paper, the channel is assumed to be variable and changes independently from one HARQ round to another. It is worth mentioning that the considered relay networks in \cite{Narasimhan2008,Maham2012,Chelli2013} operate in a decode-and-forward mode.

In this paper, we study the performance of HARQ-IR over a three node relay network. As opposed to \cite{Narasimhan2008,Maham2012,Chelli2013}, we consider in our investigation amplify-and-forward relaying (with variable gain at the relay). It is well known that the analysis of amplify-and-forward relay networks is substantially different relatively to decode-and-forward relay channels. We assume that the channel varies independently for different HARQ rounds which makes our analysis more general. We derive analytical expressions for the information outage probability of HARQ-IR over an amplify-and-forward relay network. In addition, we provide expressions for the average number of transmissions and the average transmission rate. In order to limit the delay, a maximum number $M$ of HARQ rounds is assumed. Moreover, we analyze the delay performance of HARQ-IR. We provide analytical expressions for the average waiting time for a data packet (the average time elapsed between the first packet transmission and the packet arrival) and the average sojourn time in the buffer (the average waiting time in the buffer for a data packet before the start of its transmission). We study as well the energy efficiency of HARQ-IR over the amplify-and-forward relay channel.

The remainder of this paper is organized as follows. Section~II describes the system model. An exact expression and an upper bound for the outage probability are presented in Section~III. The average number of transmissions and the average transmission rate are investigated in Section~IV and V, respectively. Section~VI is devoted to the study of the energy efficiency of HARQ-IR over amplify-and-forward relay channels. The delay analysis of this scheme is carried out in Section~VII. Finally, we draw the conclusion in Section~VIII.
\section{System Model and Preliminaries}
We consider a three node relay channel comprising a source, a relay, and a destination. The wireless link between any two nodes is modeled as a Rayleigh fading channel. Therefore, the squared envelop of the channel gains associated with the source-destination $\gamma_{SD}$, the source-relay $\gamma_{SR}$, and the relay-destination $\gamma_{RD}$ are independent exponentially distributed stochastic processes with mean values $\bar{\gamma}_{SD}$, $\bar{\gamma}_{SR}$, and $\bar{\gamma}_{RD}$, respectively. We assume that channel realizations between any two nodes are constant during one HARQ round, but change independently in subsequent HARQ rounds. To limit the delay and avoid network congestion, a maximum number $M$ of HARQ rounds is assumed. In each HARQ round, the source transmits the message to the destination and the relay overhears it. Subsequently and in a second step, the relay amplifies and forwards the message to the destination which combines both signals using maximal ratio combining and then decodes the information. An illustration of the protocol is provided in Fig.~\ref{fig:systemModel}.
\begin{figure}[h]
\begin{center}
\scalebox{0.5}{\includegraphics{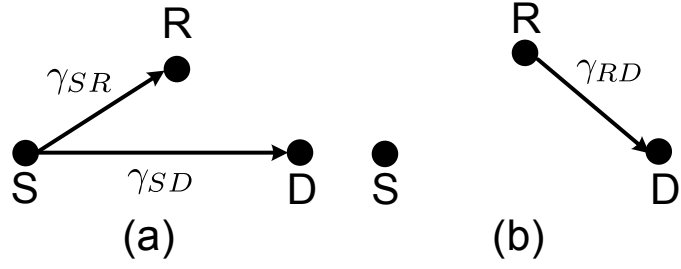}}
\caption{\label{fig:systemModel}Illustration of a single HARQ round: (a) Phase 1: the source transmits the message to the destination and the relay overhears it. (b) Phase 2: the relay amplify and forward the message to the destination. This latter combines the source message with the relayed message and tries to decode the information.\vspace*{-0.5cm}}
\end{center}
\end{figure}

Let us assume that $b$ information bits are transmitted in each data packet. When using HARQ-IR, these $b$ bits are encoded at the source into $M \times L$ symbols, where $L$ is the number of transmitted symbol by the source in every HARQ round (the source and the relay transmit together $2L$ symbols in each HARQ round). The number $m$ of HARQ rounds required to transmit successfully a data packet varies randomly according to the channel conditions. If the channel conditions are good, one HARQ round could be sufficient. Otherwise, $M$ HARQ round might be required to transmit a data packet. The transmission rate used by the source during the first HARQ round can be expressed as $R_1=\frac{b}{L}$ bits per channel use. Taking into account the relay transmission, we would have a rate of $\frac{R_1}{2}$ for the first HARQ round. If $m$ HARQ rounds are used to transmit one data packet, then the transmission rate is equal to $R_m=\frac{R_1}{2m}$. Each data packet containing $b$ information bits is encoded into $m \times 2L$ symbols. As a result, the robustness of the codeword toward fading depends on $m$. Clearly, the most robust codeword corresponds to the case when $M$ HARQ round are used to transmit a given data packet. However, this last case corresponds also to the lowest spectral efficiency. At the receiver side, the symbols received during all HARQ rounds are combined before decoding.

The capacity in bits/symbols of HARQ-IR after $m$ rounds can be expressed as
\begin{equation}\label{Capacity}
C_m=\frac{1}{2m} \sum_{l=1}^m \log_2(1+\gamma_l),
\end{equation}
where $\gamma_l$ represents the received signal at the destination during the $l$th transmission round. The prelog factor $1/2$ is due to the use of two time slots in each HARQ round (one time slot for the source transmission and one time slot for the relay transmission). In (\ref{Capacity}), $\gamma_l$ is the sum of the source-destination signal and the source-relay-destination signal, i.e., $\gamma_l=\gamma_{SRD}+\gamma_{SD}$. By using the same approximation adopted in \cite{Anghel2004,Bletsas2006,Ikki2011}, the equivalent signal to noise ratio (SNR) $\gamma_{SRD}$ of the link source-relay-destination can be approximated by its upper bound $\gamma_{up}$ as
\begin{equation}
\gamma_{SRD}= \frac{\gamma_{SR} \gamma_{RD}}{1+\gamma_{SR}+\gamma_{RD}} \leq \gamma_{up}=\mathrm{min}(\gamma_{SR},\gamma_{RD}),
\end{equation}
where $\min(x,y)$ is the minimum value of $x$ and $y$. This approximation is shown to be very tight and analytically more tractable than the exact value $\gamma_{SRD}$. Since $\gamma_{SR}$ and $\gamma_{RD}$ are independent exponentially distributed, the probability density function (PDF) of min($\gamma_{SR}$,$\gamma_{RD}$) is also exponential with a mean $\bar{\gamma}_{up}=\frac{\bar{\gamma}_{SR}\bar{\gamma}_{RD}}{\bar{\gamma}_{SR}+\bar{\gamma}_{RD}}$. Thus, $\gamma_l$ is the sum of two independent exponential distributed random variables with means $\bar{\gamma}_{SD}$ and $\bar{\gamma}_{up}$. It follows that the PDF of $\gamma_l$ can be expressed as
\begin{equation}\label{PDFgammal}
p_{\gamma_l}(\gamma)= C \left(  \mathrm{e}^{-\frac{\gamma}{\bar{\gamma}_{up}}}-\mathrm{e}^{-\frac{\gamma}{\bar{\gamma}_{SD}}} \right),~~~\gamma \geq 0,
\end{equation}
where
\begin{equation}
C=\frac{\bar{\gamma}_{SR}+\bar{\gamma}_{RD}}{\bar{\gamma}_{SR}\bar{\gamma}_{RD}-\bar{\gamma}_{SD}(\bar{\gamma}_{SR}+ \bar{\gamma}_{RD})}=\frac{1}{\bar{\gamma}_{up}-\bar{\gamma}_{SD}}.
\end{equation}
Since channel gains are independent through the different transmission rounds, the random variables $\gamma_l(l=1,\dots,M)$ are independent and identically distributed (i.i.d.).

\section{Outage Probability Analysis}

The system is in outage if the destination fails to decode the packet after $M$ HARQ rounds, or equivalently, if the capacity $C_M$ is less than the rate $R_M$. Thus, an outage happens if the accumulated mutual information after $M$ rounds is less than the rate $R_1$. As a result, the outage probability of HARQ-IR after $M$ rounds can be written as
\begin{align}
P_{out}&=\mathrm{Pr}\Big[\sum_{l=1}^M \log_2(1+\gamma_l) \leq R_1\Big] \nonumber \\
&=\mathrm{Pr}\Big[ \log_2 \Big( \prod_{l=1}^M (1+\gamma_l) \Big) \leq R_1\Big].
\end{align}
We will provide an exact expression for outage probability in Section~A and a tight upper bound in Section~B.

\subsection{Exact Expression}
In this section, we derive an exact expression for outage probability of HARQ-IR over amplify-and-forward dual-hop relay networks. The outage probability can be written as
\begin{align}
&P_{out}=\mathrm{Pr}\Big[ \log_2 \Big( \prod_{l=1}^M (1+\gamma_l) \Big) \leq R_1\Big] =\mathrm{Pr}\Big[ \prod_{l=1}^M  Z_l \leq 2^{R_1} \Big],
\end{align}
where $Z_l=1+\gamma_l$.
\begin{theorem}\label{Theo1}
Let $\xi_M=\prod_{l=1}^M Z_l$. The expression of the PDF of $\xi_M$ can be determined as
\begin{align}
&p_{\xi_M}(x) = C^M \sum_{k=0}^M \binom{M}{k} (-1)^{M-k} \mathrm{e}^{-\frac{k}{\bar{\gamma}_{up}}}~ \mathrm{e}^{-\frac{M-k}{\bar{\gamma}_{SD}}} \times \nonumber\\
&H_{0,M}^{M,0} \Bigg[~ \frac{x}{\bar{\gamma}_{up}^{k}~ \bar{\gamma}_{SD}^{(M-k)}} \Bigg\vert \begin{array} {c} - \\ \Phi_1,\ldots,\Phi_k,\Psi_1,\ldots,\Psi_{M-k} \\ \end{array} \Bigg],
\end{align}
where the terms $\Phi_{\ell}$ and $\Psi_p$ are defined as
\begin{eqnarray}
  \Phi_{\ell} &\triangleq& (0,1,\frac{1}{\bar{\gamma}_{up}})\,\,\, \mathrm{for} \,\,\,\, \ell=1,\ldots,k\\
  \Psi_p&\triangleq& (0,1,\frac{1}{\bar{\gamma}_{SD}})\,\,\, \mathrm{for} \,\,\,\, p=1,\ldots,M-k.
\end{eqnarray}
The function $H_{p,q}^{m,n}(\cdot)$ stands for the generalized upper incomplete Fox's H function defined as \cite{ferkan2009}
\begin{align}
&H_{p,q}^{m,n} \Bigg[~ z \Bigg\vert \begin{array} {c} (a_1,\alpha_1,A_1),\dots,(a_p,\alpha_p,A_p) \\ (b_1,\beta_1,B_1),\dots,(b_q,\beta_q,B_q) \\ \end{array} \Bigg] \nonumber\\
&= \frac{1}{j2\pi} \oint_\varsigma \frac{\prod_{i=1}^m \Gamma(b_i+\beta_i s,B_i) }{\prod_{i=n+1}^p \Gamma(a_i+\alpha_i s, A_i)} \times   \nonumber\\
&\hspace{2.5cm}\frac{ \prod_{i=1}^n \Gamma(1-a_i-\alpha_i s,A_i)}{ \prod_{i=m+1}^q \Gamma(1-b_i-\beta_i s,B_i)} z^{-s} ds,
\end{align}
where $\Gamma(\cdot,\cdot)$ is the upper incomplete gamma function \cite[Eq.(6.5.3)]{Abramowitz1964}.
\end{theorem}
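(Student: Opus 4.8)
The plan is to compute the PDF of $\xi_M = \prod_{l=1}^M Z_l$ via the Mellin transform, since products of independent random variables become ordinary products under the Mellin transform. First I would compute the Mellin transform of a single factor $Z_l = 1+\gamma_l$. Using the PDF in \eqref{PDFgammal}, I would write $\mathcal{M}_{Z_l}(s) = \mathbb{E}[Z_l^{s-1}] = C\int_0^\infty (1+\gamma)^{s-1}\bigl(\mathrm{e}^{-\gamma/\bar{\gamma}_{up}} - \mathrm{e}^{-\gamma/\bar{\gamma}_{SD}}\bigr)\,d\gamma$. Each of the two integrals is of the form $\int_0^\infty (1+\gamma)^{s-1}\mathrm{e}^{-\gamma/\bar{\gamma}}\,d\gamma$, which after the substitution $u = 1+\gamma$ becomes $\mathrm{e}^{1/\bar{\gamma}}\int_1^\infty u^{s-1}\mathrm{e}^{-u/\bar{\gamma}}\,du = \mathrm{e}^{1/\bar{\gamma}}\,\bar{\gamma}^{s}\,\Gamma(s, 1/\bar{\gamma})$, i.e., an upper incomplete gamma function. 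Hence $\mathcal{M}_{Z_l}(s) = C\bigl[\mathrm{e}^{1/\bar{\gamma}_{up}}\bar{\gamma}_{up}^{s}\Gamma(s,1/\bar{\gamma}_{up}) - \mathrm{e}^{1/\bar{\gamma}_{SD}}\bar{\gamma}_{SD}^{s}\Gamma(s,1/\bar{\gamma}_{SD})\bigr]$.

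Next, since the $\gamma_l$ (hence the $Z_l$) are i.i.d., the Mellin transform of the product factorizes: $\mathcal{M}_{\xi_M}(s) = \bigl(\mathcal{M}_{Z_l}(s)\bigr)^M$. I would then expand this $M$th power of a difference of two terms using the binomial theorem, which produces the sum $\sum_{k=0}^M \binom{M}{k}(-1)^{M-k}$ with $k$ factors of the $\bar{\gamma}_{up}$-type term and $M-k$ factors of the $\bar{\gamma}_{SD}$-type term. Collecting the exponential prefactors gives $C^M\,\mathrm{e}^{-k/\bar{\gamma}_{up}}\,\mathrm{e}^{-(M-k)/\bar{\gamma}_{SD}}$ wait — the sign: $\mathrm{e}^{k/\bar{\gamma}_{up}}\mathrm{e}^{(M-k)/\bar{\gamma}_{SD}}$ from the prefactors, but recording it carefully against the theorem statement; in any case the generic term carries a product of $k$ incomplete gamma functions $\Gamma(s,1/\bar{\gamma}_{up})$, $M-k$ incomplete gamma functions $\Gamma(s,1/\bar{\gamma}_{SD})$, and a power $(\bar{\gamma}_{up}^k\bar{\gamma}_{SD}^{M-k})^s$.

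Finally I would invert the Mellin transform: $p_{\xi_M}(x) = \frac{1}{j2\pi}\oint \mathcal{M}_{\xi_M}(s)\,x^{-s}\,ds$. Writing $(\bar{\gamma}_{up}^k\bar{\gamma}_{SD}^{M-k})^{s}x^{-s} = \bigl(x/(\bar{\gamma}_{up}^k\bar{\gamma}_{SD}^{M-k})\bigr)^{-s}$ absorbs the power terms into the argument of the contour integral, and the resulting integrand — a product of upper incomplete gamma functions times $z^{-s}$ — is exactly the Mellin–Barnes representation of the generalized upper incomplete Fox $H$-function defined in the statement, with the $m=M$ numerator parameters being the $\Phi_\ell = (0,1,1/\bar{\gamma}_{up})$ (appearing $k$ times) and $\Psi_p = (0,1,1/\bar{\gamma}_{SD})$ (appearing $M-k$ times), and with $n=p=0$ so that no denominator or reflected-argument gamma factors appear. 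Matching indices $H_{0,M}^{M,0}$ then yields the claimed formula. The main obstacle is bookkeeping rather than conceptual: one must verify that the contour of integration for the Mellin inversion can legitimately be identified with the contour $\varsigma$ in the Fox $H$-function definition (existence/convergence of the Mellin–Barnes integral, correct placement relative to the poles of the $\Gamma(\beta_i s, B_i)$ factors), and one must confirm that the $Z_l^{s-1}$ moments exist on a vertical strip wide enough to make the factorization $\mathcal{M}_{\xi_M}(s)=(\mathcal{M}_{Z_l}(s))^M$ valid — this is where the bound $\bar{\gamma}_{up} > \bar{\gamma}_{SD}$ implicit in the definition of $C$ and the growth of $\Gamma(s,\cdot)$ for large $|\mathrm{Im}\,s|$ need to be checked.
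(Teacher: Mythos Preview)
Your proposal is correct and follows essentially the same approach as the paper's proof: compute the Mellin transform of a single $Z_l$ in terms of upper incomplete gamma functions, raise to the $M$th power and expand binomially, then invert term-by-term and recognize the Mellin--Barnes integral as the generalized upper incomplete Fox $H$ function. Your hesitation on the sign of the exponential prefactors is warranted: the derivation (yours and the paper's own appendix) produces $\mathrm{e}^{k/\bar{\gamma}_{up}}\mathrm{e}^{(M-k)/\bar{\gamma}_{SD}}$ with positive exponents, so the negative exponents in the theorem statement appear to be a typo.
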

\begin{proof}
See Appendix~\ref{A.A}.
\end{proof}

\begin{theorem}\label{Theo2}
The expression of the cumulative distribution function (CDF) of $\xi_M$ can be derived as
\begin{align}
\nonumber&F_{\xi_M}(x) 
=C^M \sum_{k=0}^M \binom{M}{k} \bar{\gamma}_{up}^k ~(-\bar{\gamma}_{SD})^{M-k} ~ \mathrm{e}^{-\frac{k}{\bar{\gamma}_{up}}} \mathrm{e}^{-\frac{M-k}{\bar{\gamma}_{SD}}} \times \nonumber \\
& H_{1,M+1}^{M,1} \Bigg[\frac{x}{\bar{\gamma}_{up}^{k}~ \bar{\gamma}_{SD}^{(M-k)}} \Bigg\vert   \begin{array} {c} (1,1,0) \\ \Phi'_1,\ldots,\Phi'_k,\Psi'_1,\ldots,\Psi'_{M-k},(0,1,0) \\ \end{array} \Bigg],
\end{align}
where the terms $\Phi_{\ell}$ and $\Psi_p$ are defined as
\begin{eqnarray}
  \Phi'_{\ell} &\triangleq& (1,1,\frac{1}{\bar{\gamma}_{up}})\,\,\, \mathrm{for} \,\,\,\, \ell=1,\ldots,k\\
  \Psi'_p&\triangleq& (1,1,\frac{1}{\bar{\gamma}_{SD}})\,\,\, \mathrm{for} \,\,\,\, p=1,\ldots,M-k.
\end{eqnarray}
\end{theorem}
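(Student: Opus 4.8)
The plan is to obtain the CDF by integrating the PDF established in Theorem~\ref{Theo1}, that is, $F_{\xi_M}(x)=\int_0^x p_{\xi_M}(t)\,dt$. Since $\xi_M=\prod_{l=1}^M(1+\gamma_l)\geq 1$ almost surely, $p_{\xi_M}$ vanishes on $[0,1)$, so taking the lower limit to be $0$ is harmless and keeps the bookkeeping simple.

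First I would replace each incomplete Fox's H function occurring in $p_{\xi_M}(t)$ by its defining Mellin--Barnes contour integral, so that the $k$th term of $p_{\xi_M}(t)$ becomes $\frac{1}{j2\pi}\oint_\varsigma \mathcal{G}_k(s)\,(\bar{\gamma}_{up}^{k}\bar{\gamma}_{SD}^{M-k})^{s}\,t^{-s}\,ds$, where $\mathcal{G}_k(s)=\Gamma(s,\tfrac{1}{\bar{\gamma}_{up}})^{k}\,\Gamma(s,\tfrac{1}{\bar{\gamma}_{SD}})^{M-k}$ and the constant prefactor $C^M\binom{M}{k}(-1)^{M-k}\mathrm{e}^{-k/\bar{\gamma}_{up}}\mathrm{e}^{-(M-k)/\bar{\gamma}_{SD}}$ is kept outside. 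I would then interchange the $t$-integration with the contour integration (Fubini, justified by absolute convergence along $\varsigma$, the contour being placed with $\mathrm{Re}(s)<1$) and carry out the elementary inner integral $\int_0^x t^{-s}\,dt = x^{1-s}/(1-s)$. Writing $1/(1-s)=\Gamma(1-s)/\Gamma(2-s)$ inserts one extra $\Gamma(1-s)$ in the numerator and one extra $\Gamma(2-s)$ in the denominator of the Mellin--Barnes integrand; matching this against the definition of $H^{m,n}_{p,q}$, these amount to an added top parameter $(0,1,0)$ (which raises $n$ to $1$) and an added bottom parameter $(-1,1,0)$ (which raises $q$ to $M+1$). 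This already produces a closed form, $F_{\xi_M}(x)=C^M\sum_k \binom{M}{k}(-1)^{M-k}\mathrm{e}^{-k/\bar{\gamma}_{up}}\mathrm{e}^{-(M-k)/\bar{\gamma}_{SD}}\,x\,H^{M,1}_{1,M+1}[\,\cdot\,]$, in which the surviving parameters are still $(0,1,\tfrac{1}{\bar{\gamma}_{up}})$ and $(0,1,\tfrac{1}{\bar{\gamma}_{SD}})$ and the argument is $z=x/(\bar{\gamma}_{up}^{k}\bar{\gamma}_{SD}^{M-k})$.

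Next I would absorb the leftover factor $x$. Since $x=\bar{\gamma}_{up}^{k}\bar{\gamma}_{SD}^{M-k}\,z$, I apply the scaling identity $z^{\sigma}H^{m,n}_{p,q}[z\,|\,(a_i,\alpha_i,A_i);(b_j,\beta_j,B_j)]=H^{m,n}_{p,q}[z\,|\,(a_i+\sigma\alpha_i,\alpha_i,A_i);(b_j+\sigma\beta_j,\beta_j,B_j)]$, which is immediate from the substitution $s\mapsto s-\sigma$ in the Mellin--Barnes integral, with $\sigma=1$. This swallows the $z$ and shifts every first argument up by one: $\Phi_\ell=(0,1,\tfrac{1}{\bar{\gamma}_{up}})\to\Phi'_\ell=(1,1,\tfrac{1}{\bar{\gamma}_{up}})$, $\Psi_p=(0,1,\tfrac{1}{\bar{\gamma}_{SD}})\to\Psi'_p=(1,1,\tfrac{1}{\bar{\gamma}_{SD}})$, the added $(-1,1,0)\to(0,1,0)$, and the top $(0,1,0)\to(1,1,0)$ --- precisely the parameter list in the statement. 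The scalar $\bar{\gamma}_{up}^{k}\bar{\gamma}_{SD}^{M-k}$ pulled out of the H function combines with the $(-1)^{M-k}$ already present to give $\bar{\gamma}_{up}^{k}(-\bar{\gamma}_{SD})^{M-k}$, and summing over $k=0,\dots,M$ yields the claimed expression.

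The step I expect to be most delicate is the contour bookkeeping: one has to check that a single vertical line $\mathrm{Re}(s)=c$ can serve simultaneously as a valid inversion contour for $p_{\xi_M}$ and satisfy $c<1$, so that $\int_0^x t^{-s}\,dt$ converges and the newly created pole of $\Gamma(1-s)$ at $s=1$ stays to the right of the contour; and that the $\int_0^1$ portion of $\int_0^x$ contributes nothing --- consistent with $p_{\xi_M}\equiv 0$ on $[0,1)$ --- which one sees by closing the contour to the left, where $\mathcal{G}_k(s)/(1-s)$ is holomorphic. Everything else (identifying Mellin--Barnes integrands with Fox's H parameter tuples, one application of Fubini, one application of the scaling identity) is routine. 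Alternatively, one could invoke directly a tabulated formula for $\int_0^x t^{\rho-1}H^{m,n}_{p,q}[\omega t\,|\,\cdot]\,dt$ for the incomplete Fox's H function and specialize to $\rho=1$.
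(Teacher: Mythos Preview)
Your proposal is correct and follows essentially the same route as the paper: integrate the Mellin--Barnes representation of $p_{\xi_M}$, compute $\int_0^x t^{-s}\,dt$, rewrite $1/(1-s)$ via gamma functions, and shift $s\mapsto s-1$. The paper performs that last shift as a bare change of variable $h=s-1$ rather than invoking the $z^\sigma H$ scaling identity, but the two are identical manipulations; your version is in fact more carefully justified (Fubini, contour placement, vanishing on $[0,1)$) than the paper's.
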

\begin{proof}
See Appendix~\ref{A.B}.
\end{proof}
Finally, the outage probability can be written as
\begin{equation}
\label{Pout}
P_{out} = F_{\xi_M}(2^{R_1}).
\end{equation}
\subsection{Upper Bound Expression}
In this section, we derive an upper bound expression for outage probability. This upper bound has a simpler expression than the exact outage probability. Actually, the upper bound and the exact expression are expressed in terms of Meijer's G-function and the the generalized upper incomplete Fox's H function, respectively. Note that the Meijer's G-function is readily available in Mathematica, while the generalized upper incomplete Fox's H function needs to be implemented utilizing the code provided in \cite{ferkan2009}.
Using the following bound derived from the Minkowski inequality
\begin{equation}
\big((\prod_{i=1}^M x_i)^{\frac{1}{M}} +1\big)^M \leq \prod_{i=1}^M (x_i+1)~~~~\mathrm{for}~~~ x_i \geq 0,
\end{equation}
we obtain an upper bound for outage probability which can be expressed as
\begin{align}
P_{out} \leq \bar{P}_{out}&= \mathrm{Pr}\Big[M \log_2 \big(1+(\prod_{l=1}^M \gamma_l)^{\frac{1}{M}}\big) \leq R_1 \Big] \nonumber \\
&=\mathrm{Pr}\big[\prod_{l=1}^M \gamma_l \leq (2^{\frac{R_1}{M}} -1)^M \big].
\end{align}
\begin{theorem}\label{Theo3}
Let $\zeta_M=\prod_{l=1}^M \gamma_l$. The expression of the PDF of $\zeta_M$ can be evaluated as
\begin{align}
p_{\zeta_M}(x) =& C^M \sum_{k=0}^M \binom{M}{k} (-1)^{M-k}
G_{0,M}^{M,0} \Bigg[~ \frac{x}{\bar{\gamma}_{up}^{k}~ \bar{\gamma}_{SD}^{(M-k)}} \Bigg\vert \begin{array} {c} - \\ 0,...,0 \\ \end{array} \Bigg],
\end{align}
where $G_{p,q}^{m,n}(\cdot)$ is the Meijer's G-function defined as \cite{Luke1969}
\begin{align}
\label{MeijerG}
&G_{p,q}^{m,n} \Bigg[~ z \Bigg\vert \begin{array} {c} a_1,\dots,a_p \\ b_1,\dots,b_q \\ \end{array} \Bigg]  \nonumber\\
&= \frac{1}{j2\pi} \oint_\varsigma \frac{\prod_{i=1}^m \Gamma(b_i+s) \prod_{i=1}^n \Gamma(1-a_i-s)}{\prod_{i=n+1}^p \Gamma(a_i+s) \prod_{i=m+1}^q \Gamma(1-b_i-s)} z^{-s} ds.
\end{align}
\end{theorem}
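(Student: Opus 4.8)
The plan is to compute the PDF of the product $\zeta_M=\prod_{l=1}^M\gamma_l$ by following exactly the same strategy as in the proof of Theorem~1 (Appendix~A), but with the random variables $\gamma_l$ in place of $Z_l=1+\gamma_l$. Since the $\gamma_l$ are i.i.d.\ with the density $p_{\gamma_l}(\gamma)=C\big(e^{-\gamma/\bar\gamma_{up}}-e^{-\gamma/\bar\gamma_{SD}}\big)$ from \eqref{PDFgammal}, and a product of positive random variables turns into a sum under the logarithm, the natural tool is the Mellin transform: the Mellin transform of the density of a product is the product of the Mellin transforms of the individual densities. So the first step is to evaluate $\mathcal{M}\{p_{\gamma_l}\}(s)=\int_0^\infty \gamma^{s-1}p_{\gamma_l}(\gamma)\,d\gamma$, which by the standard gamma integral gives $C\,\Gamma(s)\big(\bar\gamma_{up}^{\,s}-\bar\gamma_{SD}^{\,s}\big)$ for $\Re(s)>0$.

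Next I would raise this to the $M$-th power and expand the binomial $\big(\bar\gamma_{up}^{\,s}-\bar\gamma_{SD}^{\,s}\big)^M=\sum_{k=0}^M\binom{M}{k}(-1)^{M-k}\bar\gamma_{up}^{\,ks}\,\bar\gamma_{SD}^{\,(M-k)s}$, so that the Mellin transform of $p_{\zeta_M}$ becomes $C^M\sum_{k=0}^M\binom{M}{k}(-1)^{M-k}\,\Gamma(s)^M\,\big(\bar\gamma_{up}^{\,k}\bar\gamma_{SD}^{\,M-k}\big)^{s}$. Then I invert term by term: the inverse Mellin transform $\frac{1}{2\pi j}\int \Gamma(s)^M\, y^{-s}\,ds$ is by definition \eqref{MeijerG} the Meijer's $G$-function $G_{0,M}^{M,0}\!\big[y\,\big|\,{}^{-}_{0,\dots,0}\big]$, where the argument absorbs the scaling factor, i.e.\ $y=x/\big(\bar\gamma_{up}^{\,k}\bar\gamma_{SD}^{\,M-k}\big)$. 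Collecting the sum over $k$ yields precisely the claimed expression for $p_{\zeta_M}(x)$.

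The main technical obstacle is justifying the term-by-term Mellin inversion and, in particular, the placement of the Bromwich contour $\varsigma$: one needs the abscissa of convergence of $\Gamma(s)^M\,y^{s}$ to be consistent across all $M+1$ terms (all have poles of the $\Gamma(s)^M$ at $s=0,-1,-2,\dots$, so a vertical line with $\Re(s)\in(0,\,?)$ works uniformly, or a loop contour around the poles as in \eqref{MeijerG}), and one must verify the decay of $\Gamma(s)^M$ along the contour to invoke Fubini and swap the finite sum with the integral — this is routine given the i.i.d.\ structure and the exponential decay of $|\Gamma(s)|$ off the real axis. A secondary point worth a sentence is noting that this PDF feeds directly into the upper bound: integrating $p_{\zeta_M}$ from $0$ to $(2^{R_1/M}-1)^M$ gives $\bar P_{out}$, and the integral of a $G$-function is again a $G$-function (raising the orders by one, exactly as the CDF in Theorem~2 relates to the PDF in Theorem~1), which is why the bound is expressible purely in Meijer's $G$ and is therefore evaluable in Mathematica.
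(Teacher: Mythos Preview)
Your proposal is correct and follows essentially the same route as the paper: compute the Mellin transform of $p_{\gamma_l}$ via the gamma integral to get $C\,\Gamma(s)\big(\bar\gamma_{up}^{\,s}-\bar\gamma_{SD}^{\,s}\big)$, raise to the $M$-th power, expand the binomial, and recognize each inverse Mellin term as $G_{0,M}^{M,0}$. The only difference is that you add explicit remarks on contour placement and Fubini that the paper leaves implicit, and you anticipate the CDF/Theorem~4 step; neither changes the argument.
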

\begin{proof}
See Appendix~\ref{B.A}.
\end{proof}
\begin{theorem}\label{Theo4}
The expression of the CDF of $\zeta_M$ can be determined as
\begin{align}
F_{\zeta_M}(x) &=C^M \sum_{k=0}^M \binom{M}{k} (-1)^{M-k}~ \bar{\gamma}_{up}^{k}~ \bar{\gamma}_{SD}^{(M-k)} \times \nonumber\\
&G_{1,M+1}^{M,1} \Bigg[\frac{x}{\bar{\gamma}_{up}^{k}~ \bar{\gamma}_{SD}^{(M-k)}}  \Bigg\vert \begin{array} {c} 1 \\ 1,\dots,1,0 \\ \end{array} \Bigg].
\end{align}
\end{theorem}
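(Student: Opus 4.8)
The plan is to obtain the CDF by integrating the PDF of $\zeta_M$ established in Theorem~\ref{Theo3}, that is, to evaluate $F_{\zeta_M}(x)=\int_0^x p_{\zeta_M}(t)\,dt$. Since the sum over $k$ in Theorem~\ref{Theo3} is finite, I would interchange the summation with the integration, so that the problem reduces to computing, for each $k$, the single integral $\int_0^x G_{0,M}^{M,0}\big[\omega_k t\,\big|\,-;0,\dots,0\big]\,dt$, where I write $\omega_k=1/(\bar{\gamma}_{up}^{k}\,\bar{\gamma}_{SD}^{M-k})$ for the argument scaling appearing in Theorem~\ref{Theo3}.

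To carry out this integral I would replace the Meijer's $G$-function by its Mellin--Barnes contour representation~(\ref{MeijerG}), interchange the $t$-integration with the contour integral, and use $\int_0^x t^{-s}\,dt=x^{1-s}/(1-s)$ together with $1/(1-s)=\Gamma(1-s)/\Gamma(2-s)$. The extra factor $\Gamma(1-s)$ in the numerator and $\Gamma(2-s)=\Gamma(1-(-1)-s)$ in the denominator are exactly what is needed to recognize the resulting integrand as that of a $G_{1,M+1}^{M,1}$ function with upper parameter $0$ and lower parameters $0,\dots,0,-1$, while the leftover factor $x^{1}$ is kept outside, giving
\[
\int_0^x G_{0,M}^{M,0}\Bigg[\omega_k t\,\Bigg\vert\begin{array}{c}-\\0,\dots,0\end{array}\Bigg]dt
= x\,G_{1,M+1}^{M,1}\Bigg[\omega_k x\,\Bigg\vert\begin{array}{c}0\\0,\dots,0,-1\end{array}\Bigg].
\]
Alternatively, one may simply invoke the tabulated primitive of the Meijer's $G$-function from \cite{Luke1969} to reach the same identity.

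Finally, I would apply the shift identity $z^{\sigma}G_{p,q}^{m,n}[z\,|\,(a_p);(b_q)]=G_{p,q}^{m,n}[z\,|\,(a_p)+\sigma;(b_q)+\sigma]$ (immediate from the Mellin--Barnes form by the change of variable $s\mapsto s+\sigma$) with $\sigma=1$: this turns the parameter list $0;0,\dots,0,-1$ into $1;1,\dots,1,0$ and simultaneously absorbs the prefactor $x$, producing the constant $1/\omega_k=\bar{\gamma}_{up}^{k}\,\bar{\gamma}_{SD}^{M-k}$. Reinstating $C^M\binom{M}{k}(-1)^{M-k}$ and summing over $k$ then yields exactly the claimed expression. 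I expect the main obstacle to be the rigorous justification of exchanging the definite $t$-integral with the Mellin--Barnes contour integral, together with the bookkeeping of the strip in which the contour must lie so that every manipulation is simultaneously legitimate (absolute convergence of $\int_0^x t^{-s}\,dt$ near $t=0$, the rewriting $1/(1-s)=\Gamma(1-s)/\Gamma(2-s)$ without the contour crossing the pole at $s=1$, and the admissibility of the final $G$-function); once the contour is fixed, the remaining steps are routine. As a consistency check, specializing to $M=1$ (where $\zeta_1=\gamma_1$) the formula collapses to $1-C\big(\bar{\gamma}_{up}e^{-x/\bar{\gamma}_{up}}-\bar{\gamma}_{SD}e^{-x/\bar{\gamma}_{SD}}\big)$, which matches the CDF obtained directly from the PDF in~(\ref{PDFgammal}).
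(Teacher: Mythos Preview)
Your proposal is correct and follows essentially the same route as the paper's proof in Appendix~\ref{B.B}: integrate the PDF from Theorem~\ref{Theo3} termwise, substitute the Mellin--Barnes contour representation, swap the $t$-integral with the contour integral, evaluate $\int_0^x t^{-s}\,dt$, rewrite $1/(1-s)=\Gamma(1-s)/\Gamma(2-s)$, identify the result as a $G_{1,M+1}^{M,1}$, and finally apply the parameter shift $s\mapsto s+1$ to bring the answer into the stated form. Your write-up is in fact more careful than the paper's on two points: you explicitly flag the contour/convergence bookkeeping, and your intermediate parameter list $0;0,\dots,0,-1$ is the correct one (the paper's displayed intermediate form with last lower parameter $1$ appears to be a typo, since only a uniform $+1$ shift yields the final parameters $1;1,\dots,1,0$).
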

\begin{proof}
See Appendix~\ref{B.B}.
\end{proof}
Finally, an upper bound for outage probability can be expressed as
\begin{equation}
\label{barPout}
\bar{P}_{out} = F_{\zeta_M}((2^{\frac{R_1}{M}} -1)^M).
\end{equation}
\subsection{Numerical Results}
In this section, the analytical expressions for the outage probability for HARQ-IR over amplify-and-forward dual-hop relay networks are evaluated numerically and validated by Monte Carlo simulations. Both source and relay are assumed to transmit with the same power $P$. The squared envelop of the channel gain associated with the source-destination, the source-relay, and the relay-destination links are exponentially distributed random variables with mean values $\bar{\gamma}_{SD}=0.1$ , $\bar{\gamma}_{SR}=0.3$, and $\bar{\gamma}_{RD}=0.5$, respectively.

To show the impact of cooperation on HARQ-IR performance, we plot the outage probability of HARQ-IR with and without relaying for different values of $M$ in Fig.~\ref{fig:OutagePrWithoutRel}. This figure shows that the gain realized thanks to relaying is large. Indeed, relaying achieves a 6 dB gain an outage probability of 2 $10^{-2}$ for $M=4$ and $R_1=2$ bps/Hz. Additionally, we notice that this gain increases as the number of HARQ rounds $M$ increases. It can also be seen that the outage probability decreases as $M$ increases.
\begin{figure}[h]  \centerline{\psfig{figure=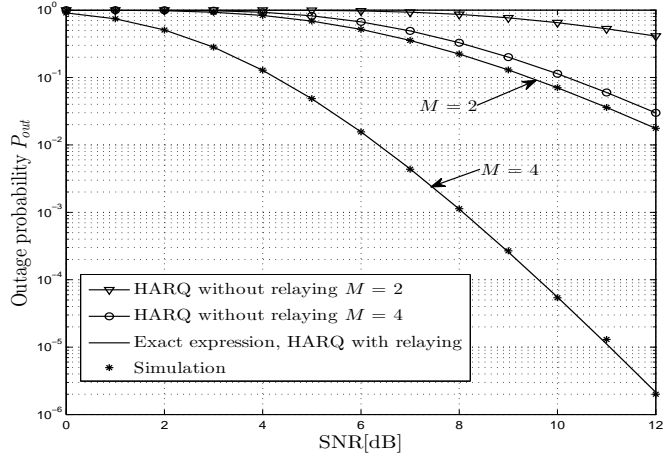,height=2.5 in,width=3.60in} }
    \caption{System outage probability versus SNR with and without relaying for $R_1=~2$~bps/Hz.\vspace*{-0.4cm}}
    \label{fig:OutagePrWithoutRel}
\end{figure}

Fig.~\ref{fig:OutagePrDifferentRate} depicts the outage probability $P_{out}$ versus SNR for different rate values $R_1$. We set the maximum number of HARQ rounds $M=2$. It can be seen that the outage probability increases as rate $R_1$ increases. We observe also that the upper bound expression of outage probability is tight, especially at low SNR, and becomes tighter as the rate increases.
\begin{figure}[h]  \centerline{\psfig{figure=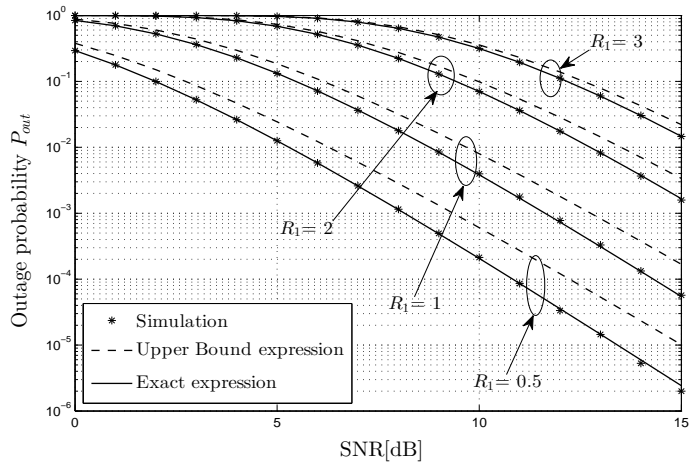,height=2.5 in,width=3.60in} }
    \caption{System outage probability versus SNR for different rate values ($R_1$=~0.5, 1, 2, 3~bps/Hz) and $M=2$.\vspace*{-0.4cm}}
    \label{fig:OutagePrDifferentRate}
\end{figure}
\section{Average Number of Transmissions}

\subsection{Analytical Expression}
Let $P_{out}(m)$ be the probability of a decoding failure at the destination after $m$ HARQ rounds. The average number of transmissions per data packet $\bar{N}$ can be expressed as \cite{ChelliPatzold2011,ChelliBarry2011}
\begin{equation}
\label{Nbar}
\bar{N}= 1+ \sum_{m=1}^{M-1} P_{out}(m).
\end{equation}
Using (\ref{Pout}), the expression of the average number of transmissions $\bar{N}$ can be written as
\begin{equation}
\bar{N}= 1+ \sum_{m=1}^{M-1} F_{\xi_m}((2^{R_1}).
\end{equation}
Utilizing (\ref{barPout}), an upper bound expression for the average number of transmissions $\bar{N}$ can be derived as
\begin{equation}
\bar{N}^{up}= 1+ \sum_{m=1}^{M-1} F_{\zeta_m}((2^{\frac{R_1}{m}} -1)^m).
\end{equation}

\subsection{Numerical Results}

As a numerical example, Fig.~\ref{fig:NbarDifferentRate} illustrates the average number of transmissions $\bar{N}$ versus SNR for different rate values and a maximum number of HARQ rounds $M=3$. This figure shows that the average number of transmissions $\bar{N}$ decreases when the SNR increases. Additionally, for high SNR, just one HARQ round is enough to decode the message correctly at the receiver side. However, for low SNR, the maximum number of HARQ rounds $M=3$ is used. It can be seen also that given a fixed SNR, $\bar{N}$ increases when rate increases. Hence, when we increase the rate, the message requires more HARQ rounds to be decoded successfully at the destination. Finally, we notice that the upper bound expression for outage probability gives a tight upper bound for the average number of transmissions $\bar{N}$ also.
\begin{figure}[h]  \centerline{\psfig{figure=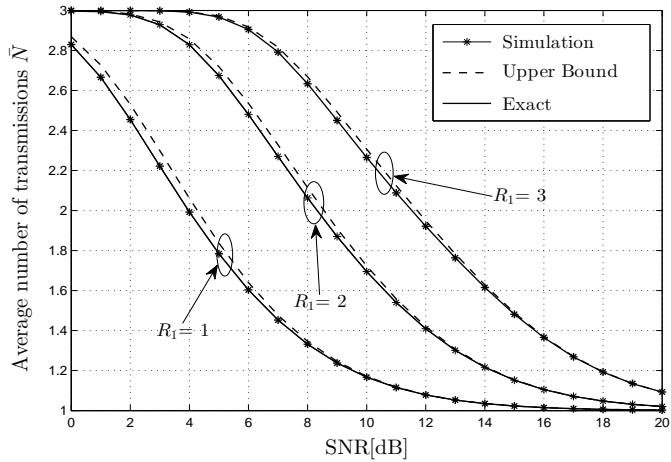,height=2.5 in,width=3.6 in} }
    \caption{Average number of transmissions $\bar{N}$ versus SNR for different rate values ($R_1$=~1, 2, 3~bps/Hz) and $M=3$.\vspace*{-0.4cm}}
    \label{fig:NbarDifferentRate}
\end{figure}
\section{Average Transmission Rate}
\subsection{Analytical Expression}
Let $N$ be the number of packets to be transmitted and $b$ the number of information bits in each packet. In each HARQ round, $L$ symbols are transmitted. We denote by $Q_n$ the number of HARQ rounds required for an error-free transmission of the $n$th data packet. The average transmission rate for $N$ data packets can be written as
\begin{equation}
\bar{R} = \frac{N b}{2L \sum_{n=1}^N Q_n} = \frac{R_1}{\frac{2}{N} \sum_{n=1}^N Q_n} = \frac{R_1}{2\bar{N}},
\end{equation}
where $\bar{N}$ is the average number of transmissions per data packet.

\subsection{Numerical Results}

As a numerical example, Fig.~\ref{fig:SpectralEfficiencyDiffRate} depicts the average transmission rate $\bar{R}$ versus SNR for different rate values and a maximum number of HARQ rounds $M=3$. It can be seen that given a fixed SNR, the average transmission rate $\bar{R}$ increases as we increase $R_1$ and as the SNR increases. For instance, if we set the value of $R_1=6$ bps/Hz, the average transmission rate $\bar{R}$ increases from 1 bps/Hz at 6 dB of SNR to 2 bps/Hz at 15 dB of SNR. We recall that $\bar{R}=R_1/2\bar{N}$. At low SNR, we have $\bar{N}=M$ which implies $\bar{R}=R_1/2M$. Similarly, at high SNR, we have $\bar{N}=1$ which implies $\bar{R}=R_1/2$. Additionally, we notice that the upper bound expression for outage probability gives a tight lower bound for the average transmission rate $\bar{R}$.
\begin{figure}[h]  \centerline{\psfig{figure=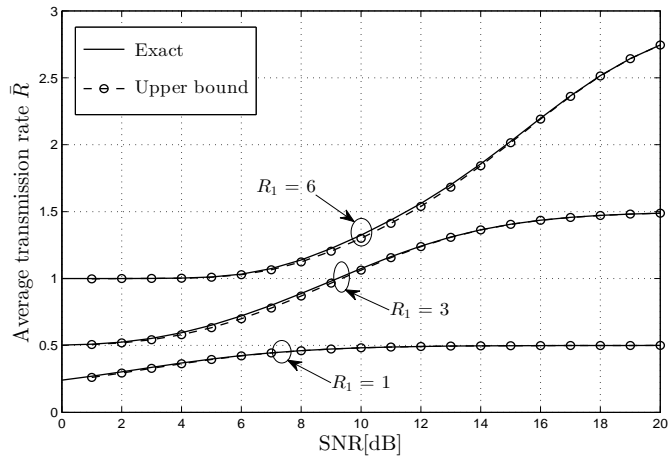,height=2.5 in,width=3.60in} }
    \caption{Average transmission rate $\bar{R}$ versus SNR for different rate values ($R_1$=~1, 3, 6~bps/Hz) and $M=3$.\vspace*{-0.4cm}}
    \label{fig:SpectralEfficiencyDiffRate}
\end{figure}

In Fig.~\ref{fig:SpectralEfficiencyVsM}, we evaluate the spectral efficiency $\bar{R}$ versus the maximum number of rounds $M$ for different SNR values and a rate $R_1=4$ bps/Hz. This figure shows that the spectral efficiency drops significantly when we increase $M$ from 2 to 4. This decay increases as the SNR decreases. For instance, when we increase $M$ from 2 to 4, the spectral efficiency loss has a value of 0.48 bps/Hz and 0.13 bps/Hz approximately for an SNR of 2 dB and 8 dB, respectively. It is worth recalling here that the outage probability decreases as $M$ increases [see Fig.~\ref{fig:OutagePrWithoutRel}]. Hence, for a constant communication rate $R_1$, as $M$ increases the spectral efficiency drops while the link reliability increases. Thus, there is a tradeoff between reliability and spectral efficiency in this case. For values of $M \geq 6$, the spectral efficiency remains constant even if we increase $M$. Finally, given a fixed $M$, we observe that spectral efficiency increases as SNR increases.

\begin{figure}[ht]  \centerline{\psfig{figure=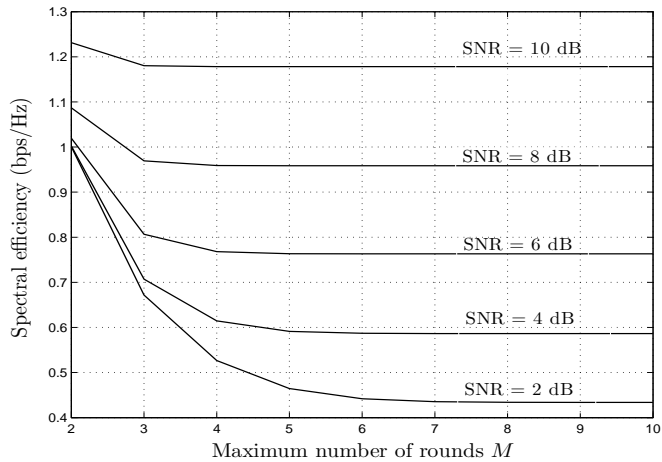,height=2.5 in,width=3.60in} }
    \caption{Spectral efficiency versus the maximum number of rounds $M$ for different SNR values and $R_1=~4$~bps/Hz.\vspace*{-5cm}}
    \label{fig:SpectralEfficiencyVsM}
\end{figure}

\section{Energy Efficiency Analysis}
\subsection{Analytical Expression}
In this section, we study the energy efficiency of HARQ-IR over amplify-and-forward dual hop relay networks. Let $\eta_{EE}$ denote the energy efficiency. It is defined as the ratio of the throughput and the average consumed power and is given as \cite{Chang2012,Li2011}
\begin{equation}
\eta_{EE} = \frac{R_1}{\bar{P}} = \frac{R_1}{\bar{N}P} = \frac{2 \bar{R}}{P},
\end{equation}
where $\bar{P}$ denotes the average consumed power, and $P$ denotes the consumed power per HARQ round.

\subsection{Numerical Results}
In Fig.~\ref{fig:EnergyEfficiencyVersusRate}, we plot the energy efficiency $\eta_{EE}$ as a function of the rate $R_1$ for different SNR values and $M=3$. This figure shows that the energy efficiency increases as the rate $R_1$ increases. Hence, for low data rate, increasing the rate will result in increasing the average number of transmissions $\bar{N}$ but with a lower slope than the rate $R_1$. For large SNR, the energy efficiency saturates as we increase the rate. For low SNR, the energy efficiency increases as the rate increases. Note that for a rate $R_1>2.5$~bps the energy efficiency increases linearly with the rate $R_1$ if we set SNR=~5 dB. We recall that the spectral efficiency increases also when the rate increases (see Fig.~\ref{fig:SpectralEfficiencyDiffRate}). So, Both energy efficiency and spectral efficiency increase when the rate increases. Additionally, we notice that energy efficiency drops when SNR increases. This fact can be seen as well from Fig.~\ref{fig:EnergyEfficiencyVersusSNR}. Finally, we observe that the upper bound expression for outage probability gives a good approximation for the exact energy efficiency.
\begin{figure}[h]  \centerline{\psfig{figure=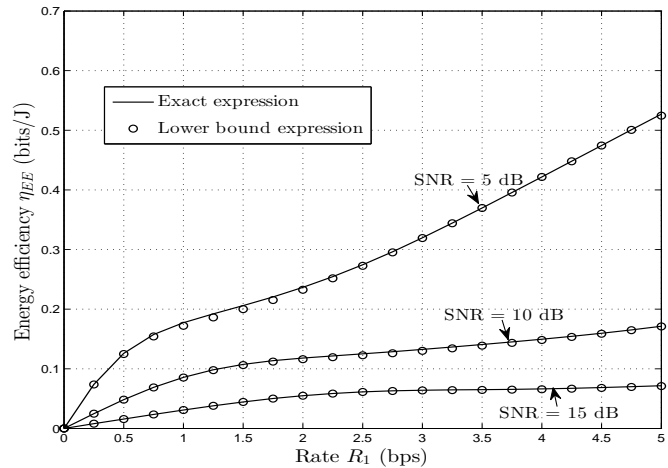,height=2.5in,width=3.60in} }
    \caption{Energy efficiency $\eta_{EE}$ versus rate $R_1$ for different SNR values (SNR=~5, 10, 15~dB) and $M=3$.}
    \label{fig:EnergyEfficiencyVersusRate}
\end{figure}

Fig.~\ref{fig:EnergyEfficiencyVersusSNR} depicts the energy efficiency $\eta_{EE}$ versus SNR for different maximum number of rounds $M$ and $R_1=1~$~bps/Hz. We notice that the energy efficiency drops when the maximum number of rounds $M$ increases especially at low SNR. Moreover, for $M \geq 5$, increasing $M$ have no impact on the energy efficiency. For a constant communication rate $R_1$, as the SNR increases the energy efficiency drops while the spectral efficiency increases. Hence, there is a tradeoff between energy efficiency and spectral efficiency.
\begin{figure}[h]  \centerline{\psfig{figure=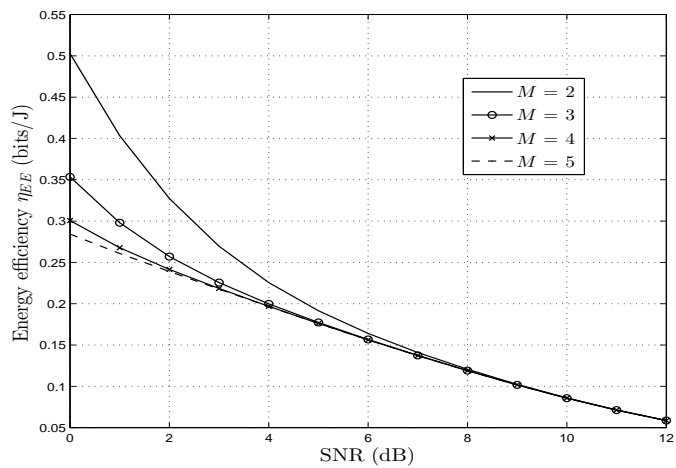,height=2.5in,width=3.6in} }
    \caption{Energy efficiency $\eta_{EE}$ versus SNR for different maximum number of rounds $M$ ($M$=~2, 3, 4, 5~rounds) and $R_1=1~$~bps/Hz.}
    \label{fig:EnergyEfficiencyVersusSNR}
\end{figure}

\section{Delay Analysis}
\subsection{Analytical Expression}
Assuming Poisson arriving packets at the source with rate $\lambda$, the average waiting time for a data packet, i.e., the time elapsed between the first sequence transmission and the successful decoding of the packet at the destination, can be evaluated using the Pollaczek-Khinchin equation \cite{Chan1997}
\begin{equation}
W = \frac{\lambda~\mathbb{E}(T_r^2)~T_F^2}{2(1-\rho)}+\frac{T_F}{2},
\end{equation}
where $T_F$ is the frame duration, and $T_r$ is a random variable accounting for the total number of transmissions for a given packet. The parameter $\rho$ should satisfy the following stability condition
\begin{equation}
\rho = \lambda~\mathbb{E}(T_r)~T_F  < 1.
\end{equation}
The packet's sojourn time in the buffer, i.e., the average waiting time in the buffer for a data packet before the start of its transmission, can be expressed as \cite{Chan1997}
\begin{equation}
T_{soj}=W+\mathbb{E}(T_r)~T_F.
\end{equation}
The expression of the average number of transmissions $\mathbb{E}(T_r)$ is provided by (\ref{Nbar}). The second-order moment $\mathbb{E}(T_r^2)$ of the total number of transmissions can be expressed as 
\begin{equation}
\mathbb{E}(T_r^2) = 1+ \sum_{m=1}^{M-1} (2m+1)P_{out}(m).
\end{equation}

\subsection{Numerical Results}
This section is devoted to the study of the average waiting time $W$ and sojourn time $T_{soj}$. The frame length is set to $T_F=1~$~s. In Figs.~\ref{fig:ExpectedWaitingTimeVersusSNR} and \ref{fig:SojournTimeVersusSNR}, we set the packet arrival rate to $\lambda=0.01$.

In Fig.~\ref{fig:ExpectedWaitingTimeVersusSNR}, we plot the expected waiting time $W(s)$ versus SNR for different rate values and $M=3$. It can be seen that the expected waiting time $W(s)$ is a concave function of the SNR. A point $P^*$ that maximizes the expected waiting time can be obtained. We notice that the point $P^*$ increases as the rate increases. Additionally, the maximum expected waiting time increases as rate increases. Thus, increasing the rate does not increase only the spectral efficiency and the energy efficiency, but also the expected waiting time. It can be seen also that the upper bound expression for outage probability gives a tight upper bound for the expected waiting time $W(s)$.
\begin{figure}[h]  \centerline{\psfig{figure=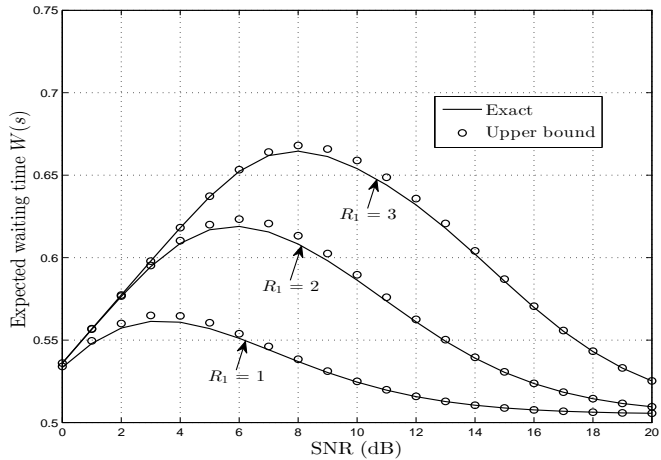,height=2.5in,width=3.60in} }
    \caption{Expected waiting time $W(s)$ versus SNR for different rate values ($R_1$=~1, 2, 3~bps/Hz) and $M=3$.}
    \label{fig:ExpectedWaitingTimeVersusSNR}
\end{figure}

Fig.~\ref{fig:ExpectedWaitingTimeVersusPacketArrivalTime} depicts the expected waiting time $W(s)$ versus packet arrival rate $\lambda$ for different maximum number of rounds $M$, a rate $R_1=1~$~bps/Hz, and SNR = 5 dB. We notice that the expected waiting time increases exponentially as the packet arrival rate increases. Additionally, we observe that the expected waiting time increases as the maximum number of rounds $M$ increases. However, this increase becomes smaller as $M$ increases. Actually, for $M \geq 5$, any increase in $M$ does not yield a significant increase in the expected waiting time.
\begin{figure}[h]  \centerline{\psfig{figure=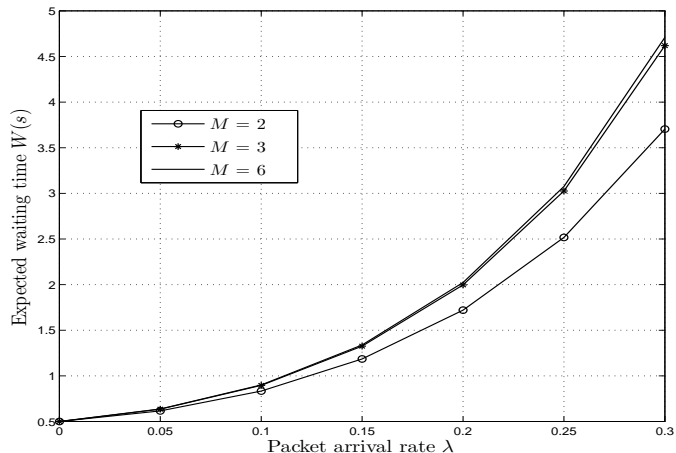,height=2.5in,width=3.60in} }
    \caption{Expected waiting time $W(s)$ versus packet arrival rate $\lambda$ for different maximum number of rounds $M$ ($M$=~2, 3, 6~rounds), $R_1=1~$~bps/Hz, and SNR = 5 dB.}
    \label{fig:ExpectedWaitingTimeVersusPacketArrivalTime}
\end{figure}

In Fig.~\ref{fig:SojournTimeVersusSNR}, we plot the sojourn time $T_{soj}(s)$ versus SNR for different maximum number of rounds $M$ and a rate $R_1=1~$~bps/Hz. It can be seen that the sojourn time decreases as SNR increases. Additionally, we notice that the sojourn time increases as the maximum number of rounds $M$ increases, especially at low SNR. However, this increase becomes smaller as $M$ increases. In fact, for $M \geq 5$, increasing $M$ have no impact on the sojourn time which conforms with the result found for the expected waiting time. We observe also that, for any value of $M$, the sojourn time $T_{soj}\to$1.5 s when the SNR$\geq$~19~dB. Finally, this figure shows also that the upper bound expression for outage probability gives a tight upper bound for the the sojourn time, and this bound becomes tighter as $M$ decreases.
\begin{figure}[h]  \centerline{\psfig{figure=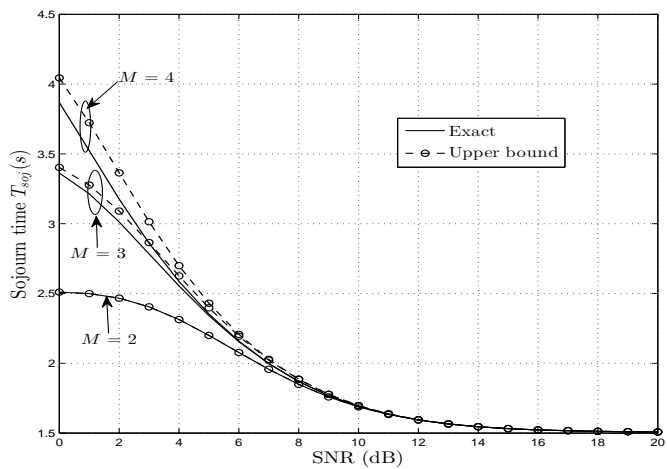,height=2.5in,width=3.60in} }
    \caption{Sojourn time $T_{soj}(s)$ versus SNR for different maximum number of rounds $M$ ($M$=~2, 3, 4~rounds) and $R_1=1~$~bps/Hz.}
    \label{fig:SojournTimeVersusSNR}
\end{figure}

\section{Conclusion}
In this paper, we have investigated the performance of HARQ-IR over dual-hop amplify-and-forward relay channels. We consider in our analysis a block fading channel. The fading varies independently from one HARQ round to another and remains constant over a single HARQ round. To limit delay a maximum number $M$ of retransmissions is assumed. We have derived an exact expression as well as an upper bound expression for the information outage probability. Our analysis reveals that the use of relaying in conjunction with HARQ yields a significant improvement in the link reliability. This latter increases as the maximum number $M$ of HARQ rounds increases. Moreover, we derive expressions for the average number of transmissions and the average transmission rate. The study of the average transmission rate $\bar{R}$ showed that if we use a constant rate $R_1$, as $M$ increases, the spectral efficiency drops while the link reliability increases. Thus, there is a tradeoff between reliability and spectral efficiency in this case.

Additionally, we have studied the energy efficiency of HARQ-IR over the amplify-and-forward dual-hop relay channel. We notice that the energy efficiency increases as the rate increases and as the SNR decreases. We carried out a delay analysis of the HARQ-IR scheme with relaying and evaluated the average waiting time and the sojourn time. We studied the impact of the rate, the packet arrival rate, and the maximum number of transmissions on the expected waiting time. We also explored the effect of the maximum number of transmissions on the sojourn time.
\appendices
\renewcommand*{\theequation}{\thesection.\arabic{equation}}
\setcounter{equation}{0}
\section{Proofs of Theorem \ref{Theo1} and Theorem \ref{Theo2}}\label{A}
In this appendix, the proofs of Theorem 1 and 2 are provided. We use the Mellin transform to derive the PDF and the CDF of the product of shifted sum of exponential random variables.
\subsection{Proof of Theorem \ref{Theo1}}\label{A.A}
Let $\xi_M=\prod_{l=1}^M Z_l=\prod_{l=1}^M (1+\gamma_l)$, where $\gamma_l$ is a random variable having as a PDF a weighted sum of exponential [see (\ref{PDFgammal})].
The PDF of $Z_l$ can be expressed as
\begin{equation}
p_{Z_l}(z)=
\begin{cases}
p_{\gamma_l}(z-1), &\text{if $z \geq 1$}\\
0, &\text{otherwise}.\\
\end{cases}
\end{equation}
Using the Mellin transform property from \cite{Springer1979}, we can express the Mellin transform of $\xi_M$ as
\begin{equation}
\mathcal{M}_s\big\{p_{\xi_M}(z)\big\}=\prod_{l=1}^M \mathcal{M}_s\big\{p_{Z_l}(z)\big\}.
\end{equation}
The Mellin transform of $Z_l$ can be derived as
\begin{align}
\mathcal{M}_s\big\{&p_{Z_l}(z)\big\} = \int_1^\infty z^{s-1} p_{Z_l}(z) dz   \nonumber\\
&=\int_1^\infty C z^{s-1} \Big( \exp\big(-\frac{z-1}{\bar{\gamma}_{up}}\big) - \exp\big(-\frac{z-1}{\bar{\gamma}_{SD}}\big)\Big) dz.
\end{align}
Using the change of variable $u=\frac{z}{\bar{\gamma}}$, we obtain
\begin{align}
\mathcal{M}_s\big\{p_{Z_l}(z)\big\} =~& C~\mathrm{e}^{\frac{1}{\bar{\gamma}_{up}}} \Big( \frac{1}{\bar{\gamma}_{up}} \Big)^{-s} \Gamma\big(s,\frac{1}{\bar{\gamma}_{up}}\big) \nonumber\\
& -C~\mathrm{e}^{\frac{1}{\bar{\gamma}_{SD}}} \Big( \frac{1}{\bar{\gamma}_{SD}} \Big)^{-s} \Gamma\big(s,\frac{1}{\bar{\gamma}_{SD}}\big),
\end{align}
where $\Gamma(\cdot,\cdot)$ is the upper incomplete gamma function defined as \cite[Eq.(06.06.02.0001.01)]{Mathematica}
\begin{equation}
\Gamma(a,z)=\int_z^\infty t^{a-1}\mathrm{e}^{-t} dt.
\end{equation}
Thus, the Mellin transform of $\xi_M$ can be determined as
\begin{align}
&\mathcal{M}_s\big\{p_{\xi_M}(z)\big\} = C^M \Big[ \mathrm{e}^{\frac{1}{\bar{\gamma}_{up}}} \Big( \frac{1}{\bar{\gamma}_{up}} \Big)^{-s} \Gamma\big(s,\frac{1}{\bar{\gamma}_{up}}\big) \nonumber\\
& \hspace{3cm} -C~\mathrm{e}^{\frac{1}{\bar{\gamma}_{SD}}} \Big( \frac{1}{\bar{\gamma}_{SD}} \Big)^{-s} \Gamma\big(s,\frac{1}{\bar{\gamma}_{SD}}\big) \Big]^M \nonumber\\
&= C^M \sum_{k=0}^M \binom{M}{k} (-1)^{M-k} \mathrm{e}^{\frac{k}{\bar{\gamma}_{up}}} \Big( \frac{1}{\bar{\gamma}_{up}} \Big)^{-ks} \Big( \Gamma\big(s,\frac{1}{\bar{\gamma}_{up}}\big) \Big)^k \times \nonumber\\
&\hspace{3cm} \mathrm{e}^{\frac{M-k}{\bar{\gamma}_{SD}}} \Big( \frac{1}{\bar{\gamma}_{SD}} \Big)^{-s(M-k)} \Big(\Gamma\big(s,\frac{1}{\bar{\gamma}_{SD}}\big) \Big)^{M-k}.
\end{align}
It follows that the PDF of $\xi_M$ can be found using the inverse Mellin transform as
\begin{align}
p_{\xi_M}(z) &= \mathcal{M}^{-1} \Big[ \mathcal{M}_s\big\{p_{\xi_M}(z)\big\} \Big] \nonumber\\
&=\frac{1}{j2\pi} \oint_\varsigma \mathcal{M}_s\big\{p_{\xi_M}(z)\big\} z^{-s} ds \nonumber\\
&=\frac{1}{j2\pi} ~C^M~  \sum_{k=0}^M \binom{M}{k} (-1)^{M-k}~ \mathrm{e}^{\frac{k}{\bar{\gamma}_{up}}}~ \mathrm{e}^{\frac{M-k}{\bar{\gamma}_{SD}}} \times \nonumber\\
&\oint_\varsigma \Big( \Gamma\big(s,\frac{1}{\bar{\gamma}_{up}}\big) \Big)^k \Big(\Gamma\big(s,\frac{1}{\bar{\gamma}_{SD}}\big) \Big)^{M-k} \Big( \frac{z}{\bar{\gamma}_{up}^k \bar{\gamma}_{SD} ^{M-k}} \Big)^{-s} ds \nonumber\\
&=C^M \sum_{k=0}^M \binom{M}{k} (-1)^{M-k} ~ \mathrm{e}^{\frac{k}{\bar{\gamma}_{up}}}~\mathrm{e}^{\frac{M-k}{\bar{\gamma}_{SD}}} \times \nonumber \\
& H_{0,M}^{M,0} \Bigg[\frac{x}{\bar{\gamma}_{up}^{k}~ \bar{\gamma}_{SD}^{(M-k)}} \Bigg\vert   \begin{array} {c} - \\ \Phi_1,\ldots,\Phi_k,\Psi_1,\ldots,\Psi_{M-k} \\ \end{array} \Bigg],
\end{align}
where the terms $\Phi_{\ell}$ and $\Psi_p$ are defined as
\begin{eqnarray}
  \Phi_{\ell} &\triangleq& (0,1,\frac{1}{\bar{\gamma}_{up}})\,\,\, \mathrm{for} \,\,\,\, \ell=1,\ldots,k\\
  \Psi_p&\triangleq& (0,1,\frac{1}{\bar{\gamma}_{SD}})\,\,\, \mathrm{for} \,\,\,\, p=1,\ldots,M-k.
\end{eqnarray}
\subsection{Proof of Theorem \ref{Theo2}}\label{A.B}
The CDF $F_{\xi_M}$ of $\xi_M$ can be deduced from its PDF as follows
\begin{align}
F_{\xi_M}(x) &=\int_0^x p_{\xi_M}(z)dz \nonumber \\
&=\frac{1}{j2\pi} C^M \sum_{k=0}^M \binom{M}{k} (-1)^{M-k} ~\mathrm{e}^{\frac{k}{\bar{\gamma}_{up}}} ~\mathrm{e}^{\frac{M-k}{\bar{\gamma}_{SD}}} \times \nonumber\\
\oint_\varsigma \Big( \Gamma\big(s,&\frac{1}{\bar{\gamma}_{up}}\big) \Big)^k \Big(\Gamma\big(s,\frac{1}{\bar{\gamma}_{SD}}\big) \Big)^{M-k} \Big[ \int_0^x \Big( \frac{z}{\bar{\gamma}_{up}^k \bar{\gamma}_{SD} ^{M-k}} \Big)^{-s} dz \Big] ds \nonumber\\
&=\frac{1}{j2\pi} ~C^M~ \sum_{k=0}^M \binom{M}{k} (-1)^{M-k}~ \mathrm{e}^{\frac{k}{\bar{\gamma}_{up}}}~ \mathrm{e}^{\frac{M-k}{\bar{\gamma}_{SD}}} \times \nonumber\\
&~~~\oint_\varsigma \Big( \Gamma\big(s,\frac{1}{\bar{\gamma}_{up}}\big) \Big)^k \Big(\Gamma\big(s,\frac{1}{\bar{\gamma}_{SD}}\big) \Big)^{M-k} \times \nonumber\\
&\hspace{2cm} \Big(\frac{\bar{\gamma}_{up}^k \bar{\gamma}_{SD} ^{M-k}}{1-s} \Big) \Big( \frac{z}{\bar{\gamma}_{up}^k \bar{\gamma}_{SD} ^{M-k}} \Big)^{-(s-1)} ds.
\end{align}
Using the relation $1-s=\Gamma(2-s,0)/\Gamma(1-s,0)$ and the change of variable $h=s-1$, we obtain
\begin{align}
\nonumber&F_{\xi_M}(x) 
=C^M \sum_{k=0}^M \binom{M}{k} \bar{\gamma}_{up}^k ~(-\bar{\gamma}_{SD})^{M-k} ~ \mathrm{e}^{-\frac{k}{\bar{\gamma}_{up}}} \mathrm{e}^{-\frac{M-k}{\bar{\gamma}_{SD}}} \times \nonumber \\
& H_{1,M+1}^{M,1} \Bigg[\frac{x}{\bar{\gamma}_{up}^{k}~ \bar{\gamma}_{SD}^{(M-k)}} \Bigg\vert   \begin{array} {c} (1,1,0) \\ \Phi'_1,\ldots,\Phi'_k,\Psi'_1,\ldots,\Psi'_{M-k},(0,1,0) \\ \end{array} \Bigg],
\end{align}
where the terms $\Phi_{\ell}$ and $\Psi_p$ are defined as
\begin{eqnarray}
  \Phi'_{\ell} &\triangleq& (1,1,\frac{1}{\bar{\gamma}_{up}})\,\,\, \mathrm{for} \,\,\,\, \ell=1,\ldots,k\\
  \Psi'_p&\triangleq& (1,1,\frac{1}{\bar{\gamma}_{SD}})\,\,\, \mathrm{for} \,\,\,\, p=1,\ldots,M-k.
\end{eqnarray}

\renewcommand*{\theequation}{\thesection.\arabic{equation}}
\setcounter{equation}{0}
\section{Proofs of Theorem \ref{Theo3} and Theorem \ref{Theo4}}\label{B}
We present in this appendix the proofs of Theorem 3 and 4.  We use the inverse Mellin transform approach to derive the PDF for the product of the sum of exponential independent random variables.
\subsection{Proof of Theorem \ref{Theo3}}\label{B.A}
Let $\zeta_M=\prod_{l=1}^M \gamma_l$. Using the Mellin transform property from \cite{Springer1979}, we can express the Mellin transform of $\zeta_M$ as
\begin{equation}
\mathcal{M}_s\big\{p_{\zeta_M}(x)\big\}=\prod_{l=1}^M \mathcal{M}_s\big\{p_{\gamma_l}(x)\big\}.
\end{equation}
The Mellin transform of $\gamma_l$ can be determined as
\begin{align}
\mathcal{M}_s\big\{&p_{\gamma_l}(x)\big\} = \int_0^\infty x^{s-1} p_{\gamma_l}(x) dx   \nonumber\\
&=\int_0^\infty C x^{s-1} \Big( \exp\big(-\frac{x}{\bar{\gamma}_{up}}\big) - \exp\big(-\frac{x}{\bar{\gamma}_{SD}}\big)\Big) dx.
\end{align}
Using the change of variable $u=\frac{x}{\bar{\gamma}}$, we obtain
\begin{align}
\mathcal{M}_s\big\{p_{\gamma_l}(x)\big\} =~& C\left((\bar{\gamma}_{up})^{s}-(\bar{\gamma}_{SD})^{s}\right) \Gamma(s),
\end{align}
where $\Gamma(s)=\int_0^{\infty} u^{s-1} \mathrm{e}^{-u} du$.

Thus, the Mellin transform of $\zeta_M$ can be expressed as
\begin{align}
&\mathcal{M}_s\big\{p_{\zeta_M}(x)\big\} = C^M\left[(\bar{\gamma}_{up})^{s}-(\bar{\gamma}_{SD})^{s}\right]^M [\Gamma(s)]^M.
\end{align}
It follows that the PDF of $\zeta_M$ can be found using the inverse Mellin transform as
\begin{align}
&p_{\zeta_M}(x)= \mathcal{M}^{-1} \Big[ \mathcal{M}_s\big\{p_{\zeta_M}(x)\big\} \Big] \nonumber\\
&=\frac{1}{j2\pi}\oint_\varsigma C^M~\big((\bar{\gamma}_{up})^{s}-(\bar{\gamma}_{SD})^{s}\big)^M~ [\Gamma(s)]^M x^{-s}ds \nonumber \\
&=\frac{C^M}{j2\pi}\sum_{k=0}^M \binom{M}{k} (-1)^{M-k} \oint_\varsigma ~[\Gamma(s)]^M ~\bar{\gamma}_{up}^{ks}~ \bar{\gamma}_{SD}^{(M-k)s}~ x^{-s}ds \nonumber \\
&=\frac{C^M}{j2\pi}\sum_{k=0}^M \binom{M}{k} (-1)^{M-k} \oint_\varsigma ~[\Gamma(s)]^M ~ \Big(\frac{x}{\bar{\gamma}_{up}^{k}~ \bar{\gamma}_{SD}^{(M-k)}}\Big)^{-s}ds \nonumber \\
&=C^M \sum_{k=0}^M \binom{M}{k} (-1)^{M-k}~ G_{0,M}^{M,0} \Bigg[~ \frac{x}{\bar{\gamma}_{up}^{k}~ \bar{\gamma}_{SD}^{(M-k)}} \Bigg\vert \begin{array} {c} - \\ 0,...,0 \\ \end{array} \Bigg],
\end{align}
where $G_{p,q}^{m,n}(\cdot)$ is the Meijer's G-function defined as (\ref{MeijerG}).
\subsection{Proof of Theorem \ref{Theo4}}\label{B.B}
The CDF $F_{\zeta_M}$ of $\zeta_M$ can be expressed as
\begin{align}
F_{\zeta_M}(x) &=\int_0^x p_{\zeta_M}(t)dt \nonumber \\
&=\frac{C^M}{j2\pi}\sum_{k=0}^M \binom{M}{k} (-1)^{M-k} \times  \nonumber \\
&\hspace{1.5cm} \oint_\varsigma ~[\Gamma(s)]^M ~ \int_0^x \Big(\frac{t}{\bar{\gamma}_{up}^{k}~ \bar{\gamma}_{SD}^{(M-k)}}\Big)^{-s}dtds \nonumber \\
&=\frac{C^M}{j2\pi}\sum_{k=0}^M \binom{M}{k} (-1)^{M-k} \times  \nonumber \\
&\hspace{1.5cm} \oint_\varsigma ~\frac{[\Gamma(s)]^M}{1-s} ~ \Big(\frac{x}{\bar{\gamma}_{up}^{k}~ \bar{\gamma}_{SD}^{(M-k)}}\Big)^{-s}ds.
\end{align}
Using the relation $1-s=\Gamma(2-s)/\Gamma(1-s)$, we obtain
\begin{align}
F_{\zeta_M}(x) &=C^M~ x~ \sum_{k=0}^M \binom{M}{k} (-1)^{M-k} \times  \nonumber \\
&\hspace{2cm} G_{1,M+1}^{M,1} \Bigg[\frac{x}{\bar{\gamma}_{up}^{k}~ \bar{\gamma}_{SD}^{(M-k)}} \Bigg\vert \begin{array} {c} 0 \\ 0,\dots,0,1 \\ \end{array} \Bigg] \nonumber \\
&=C^M \sum_{k=0}^M \binom{M}{k} (-1)^{M-k}~ \bar{\gamma}_{up}^{k}~ \bar{\gamma}_{SD}^{(M-k)} \times \nonumber \\
&\hspace{2cm} G_{1,M+1}^{M,1} \Bigg[\frac{x}{\bar{\gamma}_{up}^{k}~ \bar{\gamma}_{SD}^{(M-k)}}  \Bigg\vert \begin{array} {c} 1 \\ 1,\dots,1,0 \\ \end{array} \Bigg].
\end{align}
\balance
\bibliographystyle{IEEEtran}
\bibliography{Reference}
\end{document}